\documentclass[10pt]{article}

\usepackage[margin=1in]{geometry}
\usepackage{amsmath,amssymb,amsthm}
\usepackage{array}
\usepackage{authblk}
\usepackage{booktabs}
\usepackage{enumitem}
\usepackage{graphicx}
\usepackage[hypertexnames=false,hidelinks]{hyperref}
\usepackage{microtype}
\usepackage{float}
\usepackage{placeins}
\usepackage{tabularx}
\newcolumntype{L}[1]{>{\raggedright\arraybackslash}p{#1}}
\newcolumntype{Y}{>{\raggedright\arraybackslash}X}

\title{Proof-Carrying Multimodal Timelines:\\Finite-Trace Modal Certificates for Video-Audio Consistency}

\author[1]{Faruk Alpay\thanks{Corresponding author: \texttt{alpay@lightcap.ai}}}
\author[2]{Hamdi Alakkad}
\affil[1]{Department of Computer Engineering, Bahcesehir University, Istanbul, Turkey\\\texttt{faruk.alpay@bahcesehir.edu.tr}}
\affil[2]{Department of Artificial Intelligence Engineering, Bahcesehir University, Istanbul, Turkey\\\texttt{hamdi.alakkad@bahcesehir.edu.tr}}

\date{}

\newtheorem{theorem}{Theorem}
\newtheorem{proposition}{Proposition}

\newcommand{\trace}{\mathcal{T}}
\newcommand{\Win}{\mathcal{W}}
\newcommand{\Atoms}{\mathsf{Atoms}}
\newcommand{\Eval}{\mathsf{Eval}}
\newcommand{\Cert}{\mathsf{Cert}}
\newcommand{\Evidence}{\mathcal{P}}
\newcommand{\Abs}{\alpha}
\newcommand{\Conc}{\gamma}

\newcommand{\defect}{\mathsf{defect}}

\newcommand{\hash}{\mathsf{hash}}
\newcommand{\DetectorClips}{300}
\newcommand{\DetectorRows}{10500}
\newcommand{\SweepRows}{5100}
\newcommand{\SweepFalsePositive}{0.046}
\newcommand{\SweepLocalization}{2.17}
\newcommand{\SweepStability}{1.00}
\newcommand{\GpuPeak}{100}
\newcommand{\GpuMean}{91}
\newcommand{\GpuHours}{2.44}
\newcommand{\GpuVramPeak}{9.6}
\newcommand{\DetectorThroughput}{2.3}
\newcommand{\RobustnessGroups}{70}
\newcommand{\RobustnessRows}{720}
\newcommand{\RobustnessStableRegion}{59.72}
\newcommand{\RobustnessCexPreserved}{95.51}
\newcommand{\RobustnessDefectDeviation}{0.104798}
\newcommand{\RobustnessBoundViolation}{0.000000000000}
\newcommand{\RobustnessThresholdMargin}{0.21}
\newcommand{\RobustnessRadiusMargin}{0.0}
\newcommand{\ArtifactUrl}{\url{https://huggingface.co/datasets/Lightcap/pcmt-artifact}}

\begin{document}
\maketitle

\begin{abstract}
Multimodal video systems often expose clip-level scores while hiding the local temporal failure that makes a video inconsistent. We formulate video-audio consistency as finite-trace modal monitoring over synchronized visual, audio, and subtitle/OCR atoms. A formula library specifies speech-speaker agreement, audio-visual event agreement, subtitle-video agreement, scene continuity, and edit-induced temporal shock. A certificate records the trace hash, formula identifier, verdict, violating indices, defect score, first counterexample window, execution engine, and certificate hash. The certificate is not a proof that a detector is correct; it is a reproducible witness that the finite-trace checker can independently reconstruct once the atom trace is fixed. The artifact decodes real MP4 video, extracts CLIP visual atoms, extracts AST audio atoms, runs dense counterfactual perturbation sweeps, emits CSV traces and JSON certificates, and renders manuscript figures from those files. The scientific contribution is the logic and certificate layer; the GPU is used as the execution engine that makes detector-scale evidence and perturbation atlases practical.
\end{abstract}

\section{Introduction}

Video-audio consistency failures are temporal objects.
A clip can look correct in aggregate while containing a two-second interval in which speech has no visible speaker, a sound has no visible source, or an edit creates a discontinuity.
Most multimedia metrics are not designed to produce a small checkable object explaining such a failure.
This paper asks whether multimodal consistency can be represented as a finite logical certificate.

We introduce proof-carrying multimodal timelines.
The phrase is used in the restricted artifact sense of proof-carrying code \cite{necula1997pcc}.
The certificate proves a checker-level claim about a serialized finite trace.
It does not prove that a neural detector generated true perceptual atoms.

The logical core is a finite trace of synchronized windows.
Each window carries visual atoms, audio atoms, and subtitle/OCR atoms.
A formula is evaluated on this finite trace using Boolean connectives and bounded temporal modalities.
Each failed formula returns a finite violating-index set and a first counterexample window.

The primary area of this work is logic.
Temporal logic introduced a general language for reasoning about program behavior \cite{pnueli1977temporal}.
Finite-trace temporal logic connects temporal specifications to finite executions and monitors \cite{degiacomo2013ltlf}.
Runtime verification studies online checking of execution traces against specifications \cite{bauer2011runtime}.
Signal temporal monitoring motivates local temporal diagnostics over measured signals \cite{maler2004monitoring}.
Our setting differs because the trace is multimodal, editable, and paired with machine-readable video evidence.

The secondary area is multimedia.
YouCook2 supplies temporally localized instructional cooking video structure \cite{zhou2018youcook2}.
AVE provides audio-visual event localization labels \cite{tian2018ave}.
AVA ActiveSpeaker provides active-speaker supervision \cite{roth2020ava}.
TVQA ties video clips to subtitles and localized natural-language questions \cite{lei2018tvqa}.
ActivityNet Captions provides dense event captions over video \cite{krishna2017activitynet}.
The artifact includes adapters for these datasets and a public-video path that can run without restricted media.

Our experiments use three layers.
The oracle layer maps dataset annotations or fixtures to atoms to test the logic without detector noise.
The detector layer uses CLIP for visual atoms \cite{radford2021clip}.
The detector layer uses AST for audio atoms \cite{gong2021ast}.
The counterfactual layer sweeps audio shift, frame drop, crop, compression, subtitle swap, and scene reorder perturbations.

\paragraph{Contributions.}
First, we define finite multimodal trace semantics for video, audio, and text atoms.
Second, we give a compact formula library with intended datasets, atom sources, and failure interpretations.
Third, we define a certificate object with deterministic reconstruction and explicit trusted-computing-base boundaries.
Fourth, we implement counterexample extraction and logical defect metrics under multimedia edits.
Fifth, we provide an arXiv-oriented artifact with CPU fallback, CUDA acceleration, real video frames in the manuscript, logs, CSV traces, figures, tests, and final zip packaging rules.

\section{Finite Multimodal Traces}

Let a video be partitioned into half-open windows
\[
  \Win=\langle w_0,\ldots,w_{n-1}\rangle,\qquad w_i=[s_i,e_i).
\]
Each index carries three finite atom sets:
\[
  V_i\subseteq\Atoms_V,\qquad A_i\subseteq\Atoms_A,\qquad T_i\subseteq\Atoms_T.
\]
The serialized trace is
\[
  \trace=(\Win,V,A,T,m),
\]
where $m$ contains metadata such as video id, dataset, edit type, extractor, and feature summaries.

Atomic satisfaction is channel-agnostic after serialization:
\[
  \trace,i\models p \quad\text{iff}\quad p\in V_i\cup A_i\cup T_i.
\]
Boolean connectives use the usual finite semantics.
The bounded eventually modality is defined by
\[
  \trace,i\models F_{[-r,+r]}\varphi
  \quad\text{iff}\quad
  \exists j\in\{0,\ldots,n-1\}: |i-j|\le r \land \trace,j\models\varphi.
\]
The global finite modality is defined by
\[
  \trace\models G\varphi
  \quad\text{iff}\quad
  \forall i\in\{0,\ldots,n-1\}: \trace,i\models\varphi.
\]

The implemented fragment is intentionally monitorable.
It contains atoms, negation on atoms, conjunction, disjunction, implication, bounded $F_{[-r,+r]}$, and outer $G$.
The fragment is closed under Boolean connectives by construction.
Bounded modalities preserve finite-trace monitorability because they only inspect a radius-limited neighborhood.
Unbounded temporal properties can be translated to automata in standard finite-trace settings \cite{degiacomo2013ltlf}.
Our artifact chooses bounded monitors because they produce short counterexample windows that are inspectable in video.

For the implemented formula library we use a guarded bounded normal form.
Every monitor is represented as
\[
  G(\alpha\Rightarrow\beta),
\]
where $\alpha$ is a Boolean formula over atoms at the current index and $\beta$ is a Boolean combination of current-index atoms and bounded obligations $F_{[-r,+r]}\gamma$ with local Boolean $\gamma$.
Implication is syntactic sugar for $\neg\alpha\vee\beta$, and negations are pushed to atoms.

\begin{proposition}[Guarded bounded normal form]
Every formula used by the artifact is equivalent on finite traces to guarded bounded normal form, and normalization is linear in the formula syntax size.
\end{proposition}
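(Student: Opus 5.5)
The plan is a structural induction over the syntax of the implemented fragment, combined with a rewriting argument. The fragment consists of atoms, negated atoms, $\wedge$, $\vee$, $\Rightarrow$, bounded $F_{[-r,+r]}$, and an outer $G$. So every artifact formula has the form $G\psi$, where $\psi$ is built from atoms by Boolean connectives and bounded eventualities. We first fix the syntax: $\psi$ is \emph{local} if it contains no $F$, and \emph{guarded-admissible} if it is a Boolean combination of local formulas and obligations $F_{[-r,+r]}\gamma$ with $\gamma$ local. The normal form then asks us to exhibit local $\alpha$ and guarded-admissible $\beta$ with $G\psi \equiv G(\alpha\Rightarrow\beta)$ on every finite trace.

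The key steps, in order, are as follows.

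\textbf{Step 1.} Eliminate implication by replacing $\alpha\Rightarrow\beta$ with $\neg\alpha\vee\beta$. Then push negations to atoms with De Morgan's laws. Each rewrite either preserves or decreases the number of connectives, so both rewrites are linear.

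\textbf{Step 2.} Observe that in the implemented library negation never occurs above $F$. The fragment only allows negation on atoms, so no $\neg F$ arises and no dual ``bounded always'' modality is needed.

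\textbf{Step 3.} Choose the trivial guard. Take $\alpha:=\top$ and $\beta:=\psi$. Since $\trace,i\models \top\Rightarrow\psi$ iff $\trace,i\models\psi$ for every index $i$, the pointwise semantics of $G$ immediately gives $\trace\models G\psi$ iff $\trace\models G(\top\Rightarrow\psi)$. If a library monitor is already written as $G(\alpha\Rightarrow\beta)$ with a local antecedent, we keep $\alpha$ as given and normalize only $\beta$.

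\textbf{Step 4.} Check that each $F_{[-r,+r]}$ in $\beta$ has a local body $\gamma$.

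Step 4 is the main obstacle, because nested bounded modalities such as $F_{[-r,+r]}(p\wedge F_{[-s,+s]}q)$ are not directly of the required shape. Collapsing them would require a disjunction over index offsets, $\bigvee_{|k|\le r}$, and that expansion can blow up the size by a factor of $r$. It also fails to be equivalent near the trace boundaries, since $F$ clips to $\{0,\ldots,n-1\}$. I would therefore first try to show, by inspecting the formula library, that every artifact formula has modal depth at most one under $G$. This is a finite syntactic check over the listed monitors, and the statement restricts to ``every formula used by the artifact'' precisely so that this check suffices. Under that check, Steps 1--3 already land in normal form.

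For the linear bound, each rewrite rule is applied once per syntax node. Negation pushing touches each node once, and implication elimination replaces one node with two. The output size is therefore at most a constant multiple of the input size, and a single recursive pass over the syntax tree computes it in linear time. Equivalence on finite traces follows by induction on $\psi$. The Boolean cases are pointwise tautologies, and the $F$ case holds because the rewrite acts only inside $\gamma$, and equivalence of the bodies at every index $j$ transfers through the bounded existential quantifier over $j$.
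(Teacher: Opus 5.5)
Your proposal is correct and follows essentially the paper's route. The library monitors already have a single outer $G$ with a local antecedent and only depth-one bounded obligations with local bodies, so eliminating implication and pushing negations to atoms by De Morgan (never crossing an $F$, which stays unexpanded) reaches the normal form in one linear pass, with equivalence because the rewrites are pointwise Boolean tautologies. Your explicit treatment of $\neg F$ and nested modalities spells out what the paper's one-line appeal to the library's shape leaves implicit. One small slip: implication elimination and De Morgan add negation nodes, so the claim that each rewrite preserves or decreases the number of connectives is wrong, but your later constant-factor output-size bound is the correct linearity argument.
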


\begin{proof}
The library formulas already have a single outer $G$ and a local antecedent.
The rewrite $\alpha\Rightarrow\beta\equiv\neg\alpha\vee\beta$ removes implication.
De Morgan rewrites push negation to atoms, and the bounded modality is not distributed or expanded.
Each syntax node is visited a constant number of times, giving a linear normalization pass.
The rewrites are Boolean tautologies plus preservation of the bounded modality subterm, so finite-trace satisfaction is unchanged.
\end{proof}

\subsection{Detector-Induced Atom Abstraction}

The logical trace is not the raw perceptual evidence.
Let
\[
  \Evidence=\langle P_0,\ldots,P_{n-1}\rangle
\]
be a finite evidence trace containing decoded frames, audio windows, subtitle spans, detector logits, and edit metadata.
An extractor induces an abstraction map
\[
  \Abs_\theta:\Evidence\to\trace
\]
by thresholding and canonicalizing evidence into atom sets.
The reverse interpretation is a concretization relation
\[
  \Conc(\trace)=\{\Evidence\mid \Abs_\theta(\Evidence)=\trace\}.
\]
Thus a certificate is a proof over $\trace$, while $\Evidence\in\Conc(\trace)$ records the perceptual states that could have generated that symbolic trace.
This separates three layers: perception evidence, symbolic finite-trace satisfaction, and certificate validation.

We order symbolic traces by stable refinement.
For a formula support $\mathsf{supp}(\varphi)$, write
\[
  \trace\preceq_\varphi\trace'
\]
when the two traces have the same windows and agree on every atom in $\mathsf{supp}(\varphi)$ outside an instability set $U$.
The radius expansion of $U$ is
\[
  U^{+r}=\{i\mid \exists j\in U.\ |i-j|\le r\}.
\]

\begin{theorem}[Certificate preservation under stable abstraction]
Let $\varphi$ be a guarded bounded monitor with maximum radius $r$.
If $\trace\preceq_\varphi\trace'$ with instability set $U$, then every index
$i\notin U^{+r}$ has the same local truth value in $\trace$ and $\trace'$.
Consequently,
\[
  |\defect(\Eval(\trace,\varphi))-\defect(\Eval(\trace',\varphi))|
  \le \frac{|U^{+r}|}{n}.
\]
If the first counterexample of $\trace$ has an evidence envelope disjoint from $U$ and no earlier index lies in $U^{+r}$, the reconstructed first counterexample is preserved in $\trace'$.
\end{theorem}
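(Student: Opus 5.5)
We prove the three claims in order: local agreement away from $U^{+r}$, then the defect bound, then preservation of the first counterexample. Throughout we use the guarded bounded normal form of the earlier Proposition, so we may take $\varphi = G(\alpha\Rightarrow\beta)$. Here $\alpha$ is local, and $\beta$ is a Boolean combination of current-index atoms and obligations $F_{[-r_k,+r_k]}\gamma_k$ with $r_k\le r$ and each $\gamma_k$ local. Normalization does not change the support, so $\mathsf{supp}(\varphi)$ is the set of atoms occurring in $\alpha$, $\beta$ and the $\gamma_k$.

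\textbf{Local agreement.} Call the \emph{local truth value} at $i$ the value of $\alpha\Rightarrow\beta$ at $i$. We argue by structural induction on this local body. For a local Boolean formula $\psi$ over $\mathsf{supp}(\varphi)$ and any $j\notin U$, the definition of $\preceq_\varphi$ gives the same atom memberships at $j$ in $\trace$ and $\trace'$. Atomic satisfaction depends only on $V_j\cup A_j\cup T_j$, so $\psi$ has the same value at $j$ in both traces. Now take $i\notin U^{+r}$. Then $i\notin U$, since $U\subseteq U^{+r}$, so $\alpha$ and the current-index atoms of $\beta$ agree at $i$. For each obligation $F_{[-r_k,+r_k]}\gamma_k$, every witness index $j$ with $|i-j|\le r_k\le r$ lies outside $U$; otherwise $i$ would belong to $U^{+r}$. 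Hence $\gamma_k$ agrees at every such $j$, and the existential over this window agrees. The windows coincide by hypothesis, so $n$ and the index range are the same in both traces. The Boolean combination therefore agrees at $i$.

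\textbf{Defect bound.} Let $S$ and $S'$ be the violating-index sets. The defect is taken to be $|S|/n$; if the paper's definition differs, we would note that any per-index average with weights at most $1/n$ gives the same bound. By the previous step, $S\setminus U^{+r} = S'\setminus U^{+r}$, so the symmetric difference $S\,\triangle\,S'$ is contained in $U^{+r}$. Then
\[
\bigl||S|-|S'|\bigr|\le |S\,\triangle\, S'|\le |U^{+r}|,
\]
and dividing by $n$ gives the stated inequality.

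\textbf{First counterexample.} Let $i^*=\min S$ be the first counterexample of $\trace$. We read the hypothesis as follows: the evidence envelope of $i^*$, meaning $i^*$ together with the neighbourhood $\{j : |j-i^*|\le r\}$ inspected by its obligations, is disjoint from $U$. This is exactly the condition $i^*\notin U^{+r}$, so by the first step $i^*\in S'$. For every earlier index $i<i^*$ we have $i\notin S$. By hypothesis $i\notin U^{+r}$, so again by the first step $i\notin S'$. Hence $\min S' = i^*$. The reconstructed first counterexample window $w_{i^*}$ is the same in both traces, because the windows are shared.

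\textbf{Main obstacle.} The only delicate point is interpretation rather than calculation. We must fix what ``evidence envelope'' means, namely the radius-$r$ neighbourhood of $i^*$, and confirm that truncation of $F$ at the trace boundaries is identical in both traces. The latter holds because both traces share $\Win$ and hence the same $n$. If the envelope were instead defined as only the index $i^*$, the claim would fail for obligations that reach into $U$. So we would state this reading explicitly before the argument.
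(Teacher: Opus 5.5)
Your proposal is correct and follows essentially the same route as the paper. Both arguments use locality of the normalized monitor to get agreement at every $i\notin U^{+r}$, confine the symmetric difference of the violating sets to $U^{+r}$ and divide by $n$, and combine the envelope hypothesis with the no-earlier-index hypothesis to keep the earliest violating index fixed; you additionally make explicit the structural induction, the inequality $\bigl||S|-|S'|\bigr|\le|S\,\triangle\,S'|$, and the reading of ``envelope disjoint from $U$'' as $i^\star\notin U^{+r}$.
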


\begin{proof}
The normalized monitor can inspect only atoms in $\mathsf{supp}(\varphi)$ within radius $r$ of the current index.
For $i\notin U^{+r}$, this whole inspected neighborhood agrees between $\trace$ and $\trace'$.
Atomic truth values, Boolean combinations, and bounded eventualities therefore evaluate identically at $i$.
Only indices inside $U^{+r}$ can change membership in the violating-index set.
Dividing the maximum possible symmetric difference by $n$ gives the defect bound.
If the earliest counterexample envelope is disjoint from $U$, all positions needed to decide that counterexample are unchanged.
If no earlier index lies in $U^{+r}$, every earlier decision is also unchanged.
Thus the same earliest violating index remains earliest after refinement.
\end{proof}

This theorem is the formal reason to report detector drift and threshold sweeps.
They estimate the instability set introduced by perception, while the certificate checker remains exact on the symbolic trace.

\section{Formula Library}

The specification library is a small family of bounded finite-trace monitors.
Each formula can consume either oracle atoms from annotations or detector atoms from CLIP/AST, subtitles, OCR, and edit metadata.
The speech-speaker monitor targets AVA ActiveSpeaker and TVQA-style speech alignment:
\[
  \varphi_{\mathsf{spk}}
  = G\bigl(\mathsf{speech}\Rightarrow F_{[-r,+r]}(\mathsf{face}\wedge\mathsf{mouth})\bigr).
\]
The audio-visual event monitor targets AVE and YouCook2:
\[
  \varphi_{\mathsf{av}}
  = G\bigl(\mathsf{sound}\Rightarrow F_{[-r,+r]}\mathsf{visible\_event}\bigr).
\]
The subtitle-video monitor targets TVQA, YouCook2, and ActivityNet Captions:
\[
  \varphi_{\mathsf{text}}
  = G\bigl(\mathsf{text\_action}\Rightarrow F_{[-r,+r]}\mathsf{visual\_action}\bigr).
\]
The continuity and edit-shock monitors are edit-sensitive logical defects rather than detector benchmarks:
\[
\begin{aligned}
  \varphi_{\mathsf{cont}}
  &=G\bigl(\mathsf{scene\_reorder}\Rightarrow
    F_{[-1,+1]}\mathsf{boundary}\wedge\neg\mathsf{shock}\bigr),\\
  \varphi_{\mathsf{shock}}
  &=G\bigl(\mathsf{edited}\Rightarrow
    \neg(\mathsf{gap}\vee\mathsf{shock}\vee\mathsf{speech\_crop})\bigr).
\end{aligned}
\]
A violation therefore has a direct diagnostic reading: missing visible speaker, missing visible sound source, unsupported caption/action text, boundary-free scene jump, or exposed edit shock.
The full JSON formula definitions in the external artifact package use these same identifiers, radii, atom names, and failure labels.

\section{Certificates}

Let $\Eval(\trace,\varphi)$ be the deterministic evaluation function for a trace and formula.
It returns a verdict $b$, an ordered violating-index tuple $I$, a defect score $d$, and a first counterexample window $q$.
The defect score is
\[
  d=\frac{|I|}{|\Win|}.
\]
If $I$ is empty, the counterexample is absent.
If $I$ is nonempty, the counterexample is the window interval at $\min I$.

A certificate is the canonical JSON object
\[
  C=(\mathsf{version},\mathsf{logic},\mathsf{formula\_id},\hash(\trace),
  \mathsf{engine},b,I,d,q,\hash(C)).
\]
The trace hash is SHA-256 over the canonical serialized window sequence.
The certificate hash is SHA-256 over the canonical certificate with the final hash field omitted.
Canonicalization sorts windows by index, sorts atom sets lexicographically, serializes JSON keys in sorted order, and stores detector-derived floating values only after window generation and threshold decisions have been made.
The trusted computing base is the serializer, the formula library, and the checker.
The trusted computing base excludes the detector that generated the atoms.
Certificate canonicity is syntactic: the violating-index tuple is canonical for the serialized trace and formula identifier resolved in the library.
The artifact does not claim canonical witnesses across arbitrary formula equivalences, because equivalence checking for richer temporal fragments is outside the checker.

\begin{small}
\begin{verbatim}
{
  "formula_id": "audio_visual_event_consistency",
  "trace_hash": "2d6c...",
  "engine": "cuda",
  "verdict": false,
  "violations": [8, 9, 10],
  "defect_score": 0.214286,
  "counterexample_window": [16.0, 18.0],
  "certificate_hash": "84af..."
}
\end{verbatim}
\end{small}

The independent replay checker reconstructs this object without calling the detector code.
It recomputes the trace hash, formula verdict, violating-index tuple, defect score, first counterexample, and certificate hash from \texttt{windows.jsonl} and the formula identifier.
On the fixture-backed run, \texttt{python3 -m pcmt.cli replay-audit} checked 350 certificates with zero mismatches.
Negative audit commands perturb the abstraction layer by dropping visual events, dropping speaker atoms, and hallucinating audio atoms; these commands write \texttt{corrupted\_detector\_drift.csv}.
The radius-ablation command writes \texttt{radius\_ablation.csv} and measures how certificate outcomes change as $r$ varies.

\begin{theorem}[Deterministic certificate reconstruction]
Fix a canonical trace serialization $\trace$, a formula identifier $f$, and a checker implementation that satisfies the semantics above.
Rechecking $f$ on $\trace$ reconstructs exactly the certificate verdict, violating-index tuple, defect score, first counterexample window, trace hash, and certificate hash.
\end{theorem}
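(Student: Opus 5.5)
The plan is to show that every component of the certificate $C$ is computed by a deterministic function of just two inputs: the canonical serialization $\trace$ and the formula identifier $f$. Reconstruction then follows by composing those functions. I will treat the checker as a pure function with no hidden state and no calls to the detector, which is exactly the trusted-computing-base assumption stated above. The structure is a composition chain: $\trace \mapsto \hash(\trace)$, $(\trace,f)\mapsto \varphi \mapsto \Eval(\trace,\varphi)=(b,I,d,q)$, and finally the canonical certificate body $\mapsto \hash(C)$.

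\textbf{Step 1: the formula and trace hash.} Resolving $f$ in the fixed formula library yields a unique formula $\varphi$. By the proposition on guarded bounded normal form, normalization is a deterministic linear rewrite, so $\varphi$ has a unique normalized form $G(\alpha\Rightarrow\beta)$. The trace hash is SHA-256 applied to the canonical byte string. Canonicalization sorts windows, atom sets and keys, and it fixes the floating values before serialization. The serialized bytes are therefore a function of $\trace$ alone, and so is $\hash(\trace)$.

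\textbf{Step 2: the evaluation outputs.} For each index $i$, local truth $\trace,i\models\alpha\Rightarrow\beta$ is decided by the finite semantics. Atom membership is a lookup in $V_i\cup A_i\cup T_i$. Boolean connectives are total functions. The bounded modality $F_{[-r,+r]}$ is a finite existential over $\{j : |i-j|\le r\}\cap\{0,\ldots,n-1\}$. Hence the violating set $I=\{i : \trace,i\not\models\alpha\Rightarrow\beta\}$ is uniquely determined, and listing it in increasing order gives a canonical tuple. From this tuple the remaining outputs follow uniquely: $b$ holds iff $I=\emptyset$, $d=|I|/|\Win|$, and $q$ is either absent or the window $w_{\min I}=[s_{\min I},e_{\min I})$ read from $\Win$.

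\textbf{Step 3: the certificate hash, and the main obstacle.} The remaining fields (version, logic, formula\_id, engine) are constants fixed by the library and the checker. The body of $C$ with the hash field omitted is therefore a determined object. Its canonical JSON serialization is unique, so $\hash(C)$ is determined. I expect this step to be the main obstacle, specifically the claim that the serialization is bit-exact rather than merely semantically equal. I would handle it in two parts. First, $d$ is a rational number whose decimal rendering must follow a fixed formatting rule. Second, the $\mathsf{engine}$ field (cpu versus cuda) could in principle differ between runs. I would make both explicit as part of the canonical serialization convention: $d$ is rendered by a fixed rounding rule, and the recorded engine is treated as data carried in the certificate being rechecked. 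With those conventions in place, determinism of SHA-256 gives equality of the hashes, and this completes the proof.
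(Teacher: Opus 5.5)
Your proposal is correct and matches the paper's proof essentially step for step: canonical serialization fixes the ordered windows and atom sets, the identifier resolves to a unique library formula, the finite semantics determine the violating-index tuple, and the defect score, counterexample window, and both hashes are deterministic functions of that data and of canonical JSON strings. Your explicit conventions for rendering $d$ and for the \texttt{engine} field usefully sharpen the argument, since the paper's one-line appeal to ``canonical JSON strings'' silently assumes both, in particular that the recorded engine does not change between the original run and the replay.
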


\begin{proof}
Canonical trace serialization gives a unique ordered sequence of windows and atom sets.
The formula identifier resolves to a unique formula in the library.
The semantics of atoms, Boolean operators, bounded $F_{[-r,+r]}$, and $G$ are deterministic functions of the ordered trace.
The violating-index tuple is therefore uniquely determined.
The defect score and counterexample window are deterministic functions of that tuple.
The trace hash and certificate hash are deterministic functions of canonical JSON strings.
\end{proof}

\begin{theorem}[Earliest local counterexample]
For a guarded bounded monitor $G\psi$, let
\[
  I=\{i\mid \trace,i\not\models\psi\}
\]
be the ordered violating-index set reconstructed by the checker.
If $I$ is nonempty and $i^\star=\min I$, then the certificate counterexample $q=w_{i^\star}$ is the unique earliest violating window.
Moreover, if $r_{\max}$ is the maximum radius occurring in $\psi$, the evidence envelope
\[
  E(q,r_{\max})=[s_{\max(0,i^\star-r_{\max})},\,e_{\min(n-1,i^\star+r_{\max})})
\]
contains every trace position inspected to decide the violation at $i^\star$.
\end{theorem}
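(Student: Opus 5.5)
The plan has two independent parts, matching the two claims of the theorem: uniqueness and earliness of $q$, then containment of the inspected positions in $E(q,r_{\max})$.

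For the first claim, I will use the fact that $I$ is a nonempty subset of the finite, totally ordered index set $\{0,\ldots,n-1\}$. It therefore has a unique minimum $i^\star$. By the definition of $\Eval$, the certificate stores the window at $\min I$, so $q=w_{i^\star}$. No $j<i^\star$ lies in $I$, so no earlier window violates $\psi$. Uniqueness of the window itself (not just the index) needs one more step. The windows $w_i=[s_i,e_i)$ are indexed by the canonical serialization of the Deterministic certificate reconstruction theorem. The stated window is therefore the one attached to index $i^\star$, and the map from indices to windows fixes it uniquely.

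For the envelope claim, I will prove an inspection lemma by structural induction on $\psi$ in guarded bounded normal form, using the Guarded bounded normal form proposition. Define $\mathrm{insp}(\chi,i)$, the set of positions inspected to decide $\chi$ at index $i$.
\begin{itemize}
\item For an atom or negated atom, $\mathrm{insp}(\chi,i)=\{i\}$, because atomic satisfaction reads only $V_i\cup A_i\cup T_i$.
\item For $\wedge$ and $\vee$, $\mathrm{insp}$ is the union of the inspected sets of the two subformulas.
\item For $F_{[-r,+r]}\gamma$ with local Boolean $\gamma$, the semantics quantify over $j\in\{0,\ldots,n-1\}$ with $|i-j|\le r$, and $\gamma$ at $j$ reads only position $j$. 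So $\mathrm{insp}\subseteq\{j : \max(0,i-r)\le j\le \min(n-1,i+r)\}$.
\end{itemize}
Since every radius in $\psi$ is at most $r_{\max}$ and the antecedent $\alpha$ is local, the induction gives $\mathrm{insp}(\psi,i^\star)\subseteq[\max(0,i^\star-r_{\max}),\min(n-1,i^\star+r_{\max})]$. This is the same neighborhood fact already used in the proof of the Certificate preservation theorem.

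The last step converts this index interval into the time interval $E(q,r_{\max})$. I will use that the windows are ordered with $s_j<e_j$, so each window $w_j$ in the index range lies within $[s_{\max(0,i^\star-r_{\max})},e_{\min(n-1,i^\star+r_{\max})})$. This needs $s$ and $e$ to be monotone in the index. That holds because the canonical serialization sorts windows by index and these are consecutive partition windows of the video, so $s_j\le s_{j'}$ and $e_j\le e_{j'}$ for $j\le j'$.

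I expect the main obstacle to be making ``inspected position'' precise enough for the induction. The semantics is purely extensional, so I would define $\mathrm{insp}$ syntactically as above. The claim then follows as a syntactic over-approximation: the checker reads only those positions, and changing atoms elsewhere cannot change the verdict at $i^\star$. I would check this invariance by the same induction. Clipping at the trace boundaries is handled by the $\max(0,\cdot)$ and $\min(n-1,\cdot)$, which match the restriction $j\in\{0,\ldots,n-1\}$ in the semantics of $F$.
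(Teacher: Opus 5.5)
Your proposal is correct and follows the same route as the paper's proof. The earliest-window claim comes from $i^\star=\min I$ together with the checker's rule of storing the window at $\min I$, and the envelope claim comes from the fact that each bounded modality in $\psi$ reads only positions within its radius, the largest being $r_{\max}$. You also make explicit several steps the paper leaves implicit: the syntactic inspection set built by structural induction on the guarded bounded normal form, the clipping at $0$ and $n-1$, and the monotonicity of $s_j$ and $e_j$ in $j$, which is needed to turn the index interval into the time interval $E(q,r_{\max})$ and which the paper's proof assumes without comment.
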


\begin{proof}
The semantics of $G$ requires $\psi$ at every index, so every violation of the global monitor is exactly an index in $I$.
The checker sorts violations by trace order and sets the counterexample to the window at the smallest index.
No earlier window violates $\psi$ by definition of $i^\star=\min I$, so the counterexample is unique for the serialized trace and syntactic formula identifier.
Every bounded modality inside $\psi$ inspects only positions within its radius of the current index.
The largest such radius is $r_{\max}$, hence all atom lookups needed for the decision at $i^\star$ lie inside the stated envelope.
\end{proof}

\begin{proposition}[Trace splicing locality]
Let $\trace'$ be obtained from $\trace$ by replacing a contiguous window block $[a,b]$ while preserving all windows outside $[a,b]$.
For a guarded bounded monitor with maximum radius $r_{\max}$, every index
\[
  i\notin[a-r_{\max},\,b+r_{\max}]
\]
has the same formula truth value in $\trace$ and $\trace'$.
\end{proposition}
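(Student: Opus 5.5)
The plan is a locality argument, reducing the proposition to the same neighborhood-agreement step used in the proof of the certificate preservation theorem. I would prove one auxiliary claim by structural induction on the guarded bounded normal form: for every subformula $\chi$ of $\psi$ with maximum radius $\rho(\chi)$, the truth value of $\trace,i\models\chi$ depends only on the atom sets $V_j\cup A_j\cup T_j$ at indices $j$ with $|i-j|\le\rho(\chi)$, together with the fixed length $n$. Splicing preserves $n$ because it replaces a block of windows by a block on the same index range. The proposition then follows by instantiating the claim with $\chi=\psi$ and $\rho(\psi)=r_{\max}$.

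\textbf{Induction.} The steps are as follows.
\begin{enumerate}
\item For an atom or negated atom the radius is $0$, and atomic satisfaction reads only $V_i\cup A_i\cup T_i$.
\item For $\wedge$, $\vee$ and implication (rewritten as $\neg\alpha\vee\beta$, as in the normalization proposition), the value at $i$ is a Boolean function of the children's values at $i$. The children's radii are bounded by the parent's, so the dependency sets stay inside the parent's neighborhood.
\item For $F_{[-r,+r]}\gamma$, the defining existential ranges over $j\in\{0,\ldots,n-1\}$ with $|i-j|\le r$. By the guarded normal form $\gamma$ is local (radius $0$), so each such $j$ reads only position $j$. In general, $\gamma$ has radius $\rho(\gamma)$ and $\rho(F_{[-r,+r]}\gamma)=r+\rho(\gamma)$, with radii adding along nested modalities. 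In the guarded form they do not nest, so $r_{\max}$ is just the largest modal radius.
\end{enumerate}

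\textbf{Combining with the splice.} Take $i\notin[a-r_{\max},b+r_{\max}]$. Then either $i<a-r_{\max}$ or $i>b+r_{\max}$. In both cases every $j$ with $|i-j|\le r_{\max}$ lies outside $[a,b]$, so $\trace$ and $\trace'$ agree on every inspected position. The auxiliary claim then gives equal truth values of $\psi$ at $i$. One could alternatively cite the preservation theorem with $U=[a,b]$, since $U^{+r_{\max}}=[a-r_{\max},b+r_{\max}]\cap\{0,\ldots,n-1\}$. I would still give the direct argument, because that theorem is phrased for refinement on $\mathsf{supp}(\varphi)$ with a common window sequence.

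\textbf{Main obstacle.} The delicate point is the boundary behaviour of the bounded modality near the trace ends. The quantifier is clipped to $\{0,\ldots,n-1\}$, so the claim needs $\trace'$ to have the same $n$. Otherwise an index near the end could gain or lose witnesses, or be shifted altogether. I would therefore state explicitly that the splice replaces windows index-for-index, so the block $[a,b]$ is replaced by a block of the same length. The window endpoints $s_i,e_i$ inside the block may change, but the semantics never reads them. If variable-length splicing were intended, the statement would need reindexing of the suffix, and I would flag that the proposition as stated only covers the length-preserving case.
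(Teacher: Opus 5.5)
Your proposal is correct and follows the same argument as the paper. In both, the monitor at $i$ inspects only positions within radius $r_{\max}$, and that neighborhood lies entirely outside $[a,b]$ whenever $i\notin[a-r_{\max},b+r_{\max}]$. Your structural induction and your explicit assumption that the splice replaces windows index for index, keeping $n$ fixed, make precise what the paper's proof leaves implicit.
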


\begin{proof}
At index $i$, the normalized monitor can inspect only the current atoms and atoms within radius $r_{\max}$ through bounded obligations.
If $i$ lies outside the expanded splice interval, that entire inspected neighborhood is unchanged by the splice.
All Boolean and bounded evaluations are therefore identical at $i$.
Only the splice interval expanded by the formula radius can change certificate verdicts or defect scores.
\end{proof}

\begin{proposition}[Checker complexity]
For $n$ windows, atom vocabulary size $k$, formula size $|\varphi|$, and maximum radius $r$, the direct checker runs in $O(nk+n|\varphi|r)$ time and $O(nk)$ memory.
The bitset checker stores atom membership as a Boolean matrix and evaluates bounded modalities by radius shifts.
\end{proposition}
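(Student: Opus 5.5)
The plan is to analyse the direct checker in three phases and add their costs: building the atom table, evaluating the normalized monitor at every index, and assembling the certificate. By the guarded bounded normal form proposition, any library formula can first be rewritten in linear time into $G(\alpha\Rightarrow\beta)$, with negations pushed to atoms. After that, $\alpha$ and the local parts of $\beta$ are Boolean formulas over atoms at the current index. Each bounded obligation $F_{[-r',+r']}\gamma$ has $r'\le r$ and a local Boolean body $\gamma$. The normalization cost is $O(|\varphi|)$, which is absorbed by $n|\varphi|r$ whenever $n\ge 1$.

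\textbf{Step 1: the atom table.} Read the canonical trace and store, for each index $i$, the membership vector of $V_i\cup A_i\cup T_i$ over the vocabulary of size $k$. This is an $n\times k$ Boolean matrix. Filling it costs $O(nk)$ time, since serialized atom sets have size at most $k$ per index, and it occupies $O(nk)$ memory. This matrix is also exactly the bitset representation named in the second sentence of the proposition. Afterwards, an atomic query $\trace,i\models p$ is a constant-time lookup.

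\textbf{Step 2: per-index evaluation.} For a fixed $i$, evaluate $\neg\alpha\vee\beta$ by recursion on syntax.
\begin{itemize}
\item Each Boolean node costs $O(1)$ beyond its children, and each atom costs $O(1)$.
\item A bounded node $F_{[-r',+r']}\gamma$ is handled by scanning $j\in[\max(0,i-r'),\min(n-1,i+r')]$. That is at most $2r+1$ positions, and at each one the local body $\gamma$ is evaluated in $O(|\gamma|)$.
\end{itemize}
Summing over nodes gives $O(|\varphi|(2r+1))=O(|\varphi|r)$ per index, treating $r\ge 1$ or absorbing the constant. Over all indices this is $O(n|\varphi|r)$. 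Only $O(n)$ extra memory is needed, for the violation flags.

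\textbf{Step 3: certificate assembly.} Collect the violating tuple $I$ in index order during the left-to-right scan, so no sorting is needed. Then compute $d=|I|/n$ and $q=w_{\min I}$. Hashing the canonical serialization is linear in its size, $O(nk)$ under the same per-window bound. Adding the three phases gives $O(nk+n|\varphi|r)$ time and $O(nk)$ memory.

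\textbf{The bitset variant.} Each local subformula $\gamma$ is evaluated once as a length-$n$ column over all indices, using $O(n|\gamma|)$ operations. The obligation $F_{[-r',+r']}\gamma$ is then the OR of $2r'+1$ shifted copies of that column, which costs $O(nr)$ per bounded node and stays within the same bound. A prefix-count trick would even remove the factor $r$, but this is not needed for the stated bound.

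The main obstacle is accounting, not logic. One has to check that nested or repeated bounded obligations do not multiply radii; this holds because the normal form forbids nesting, so each $\gamma$ is local. One also has to state clearly that atom-set sizes are bounded by $k$, so that serialization and hashing stay within $O(nk)$.
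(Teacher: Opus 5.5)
Your proposal is correct, but your main argument evaluates the formula index by index, whereas the paper's evaluates it column by column. The paper's proof is essentially your closing ``bitset variant'' paragraph used as the whole argument: build the $n\times k$ atom matrix in $O(nk)$, then sum over the $|\varphi|$ syntax nodes, treating each as one pass over a length-$n$ truth vector. Atoms are column lookups, Boolean connectives are $O(n)$ passes, and each bounded modality is at most $2r+1$ shifted combinations. In effect the paper analyzes the bitset evaluation and attributes the bound to the direct checker, while you treat the two checkers separately.

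Because every node is processed exactly once, the column-wise route does not need the guarded normal form: nested bounded modalities would only add $O(nr)$ per node. It also mirrors the Boolean-tensor implementation. Your per-index recursion does depend on the normal form forbidding nesting, since otherwise the scan cost at each index grows like $(2r+1)^{d}$ for nesting depth $d$. You correctly identify and invoke that property.

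In exchange, your route actually justifies the $O(nk)$ memory claim, which the paper's proof never addresses. Pointwise evaluation needs only the atom matrix and $O(n)$ violation flags. Column-wise evaluation that keeps one length-$n$ column per syntax node uses $O(n|\varphi|)$ memory unless intermediate columns are discarded. You also account for normalization, certificate assembly, and hashing, which the paper omits. Finally, you rightly flag the degenerate case $r=0$ (for instance $\varphi_{\mathsf{shock}}$ has no bounded modality). There the term $n|\varphi|r$ vanishes and the accurate bound is $O(nk+n|\varphi|(r+1))$; the paper's summation over nodes relies on the same implicit convention.
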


\begin{proof}
Building the atom matrix scans each window against the vocabulary.
Each atomic subformula is a column lookup.
Each Boolean connective is a linear pass over $n$ truth values.
Each bounded modality performs at most $2r+1$ shifted linear combinations.
The stated bound follows by summing over formula syntax nodes.
\end{proof}

\section{Implementation}

The repository implements the pipeline in Python.
PyTorch supplies the CUDA execution backend for batched tensors \cite{paszke2019pytorch}.
OpenCV decodes sampled video frames.
\texttt{ffmpeg} extracts mono audio streams for windowed audio features.
The CPU path uses NumPy arrays and the CUDA path uses Boolean tensors for bitset checking.

The detector path loads CLIP for visual prompt scores.
The detector path loads AST for AudioSet-style audio tags.
The implementation also keeps simple motion and RMS summaries because they make Figure~\ref{fig:realcase} auditable.
The simple summaries are not treated as semantic ground truth.
The emitted CSV schema keeps identity, atoms, formula outputs, certificate outputs, and sweep diagnostics in one row family.
The required columns include \texttt{video\_id}, \texttt{window\_start}, \texttt{window\_end}, \texttt{visual\_atoms}, \texttt{audio\_atoms}, \texttt{subtitle\_ocr\_atoms}, \texttt{edit\_type}, \texttt{formula\_verdict}, \texttt{defect\_score}, \texttt{certificate\_hash}, and \texttt{counterexample\_window}.
Dense sweep runs additionally record \texttt{edit\_family}, \texttt{parameter}, \texttt{expected\_window}, \texttt{localization\_error}, and \texttt{trace\_hash}.

\FloatBarrier
\section{Experiments}

\subsection{Three-layer design}

The oracle layer maps annotations or fixtures to atoms.
This layer tests logical checking without detector noise.
The detector layer runs pretrained atom extractors over decoded video and audio.
This layer tests deployment behavior under imperfect perception.
The counterfactual layer applies controlled edits and evaluates monotonicity, localization, false positives, and certificate stability.

The comparison target is not a GPU benchmark or a new detector leaderboard.
For visual-text alignment, CLIPScore-style scoring gives a clip/window similarity baseline \cite{hessel2021clipscore}.
For speech synchronization, SyncNet-style lip-sync scoring gives an audio-visual alignment baseline \cite{chung2016syncnet}.
For audio-visual representation learning, self-supervised correspondence methods provide a detector-level reference point \cite{arandjelovic2017look,owens2018audiovisual}.
The certificate layer is complementary to these baselines: it consumes their atoms or scores, then reports the exact finite-trace formula, violating indices, defect score, and counterexample window.
The current artifact exports the fields needed to compute per-edit ROC/AUPRC, localization MAE, threshold-sensitivity curves, and cross-dataset drift once the restricted labels are mounted locally.

The detector benchmark decodes real MP4 frames and audio from the YouCook2 HF subset.
The completed detector run contains \DetectorClips{} clips and \DetectorRows{} formula certificates.
The completed dense perturbation run contains \SweepRows{} formula certificates over public MP4 clips.
The active scale-up queue continues with an 80-clip YouCook2 sweep and larger follow-up jobs on the remote RTX PRO 6000 S.
Those scale-up jobs write separate traces, certificates, metrics, and telemetry before they replace the completed-run numbers.
We therefore distinguish completed evidence from queued scale telemetry in Table~\ref{tab:metrics}.
The paper does not claim that high GPU memory usage is itself a result; the scientific evidence is the certificate atlas and the replayable traces it produces.
\FloatBarrier

\subsection{Real-video certificate case}

The real-video case study in Figure~\ref{fig:realcase} pairs decoded frames from a public MP4 with aligned signal traces.
The public sample videos come from Samplelib \cite{samplelib}.
The shaded C1--C3 bands are certificate counterexample windows.
The figure deliberately excludes the failure table.
The corresponding certificate rows are listed separately in Table~\ref{tab:failure-table}.

\begin{figure}[H]
    \centering
    \includegraphics[width=\linewidth]{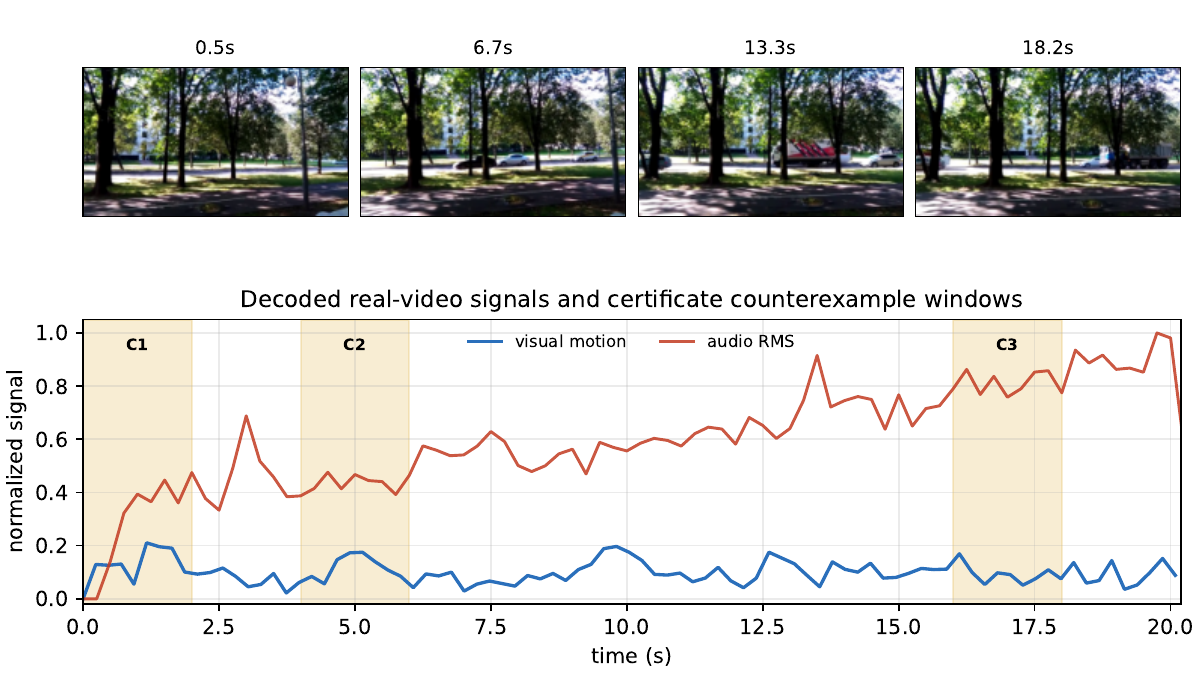}
    \caption{Real-video certificate case study. Top: frames decoded from the MP4 used by the artifact. Bottom: visual-motion and audio-RMS traces with certificate counterexample windows. The failure table is separated into Table~\ref{tab:failure-table}.}
    \label{fig:realcase}
\end{figure}

\begin{table}[H]
\centering
\small
\begin{tabular}{lllll}
\toprule
Band & Edit & Formula & Defect & Counterexample\\
\midrule
C1 & scene reorder & edit shock & 1.00 & $[0.0,2.0]$\\
C1 & scene reorder & scene continuity & 1.00 & $[0.0,2.0]$\\
C2 & compression & edit shock & 0.27 & $[4.0,6.0]$\\
C2 & compression & scene continuity & 0.27 & $[4.0,6.0]$\\
C3 & scene reorder & audio-visual & 0.27 & $[16.0,18.0]$\\
C1 & audio shift & edit shock & 0.09 & $[0.0,2.0]$\\
\bottomrule
\end{tabular}
\caption{Failure rows corresponding to the C1--C3 bands in Figure~\ref{fig:realcase}. The values are read from the generated CSV failure table.}
\label{tab:failure-table}
\end{table}
\FloatBarrier

\subsection{Defect atlas and perturbation curves}

\begin{figure}[H]
    \centering
    \includegraphics[width=0.92\linewidth]{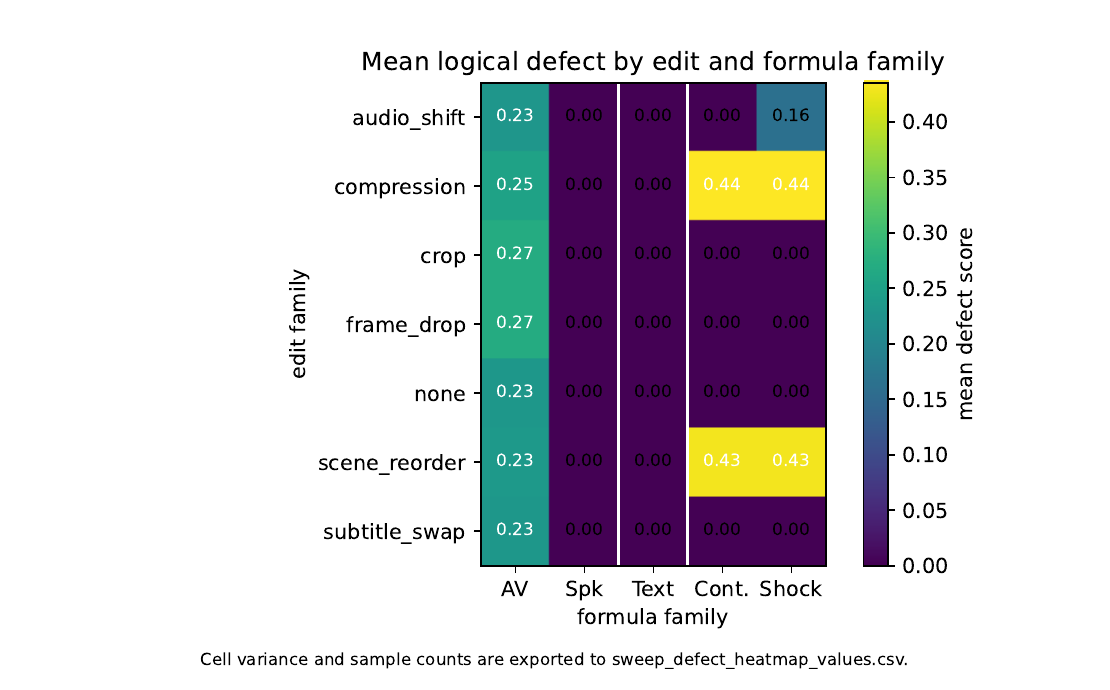}
    \caption{Defect atlas over counterfactual edits. Each cell reports mean defect; standard deviation and sample count are exported in \texttt{sweep\_defect\_heatmap\_values.csv}. White separators distinguish audio-visual/text formulas from continuity/shock formulas.}
    \label{fig:heatmap}
\end{figure}

The atlas in Figure~\ref{fig:heatmap} gives the empirical view of the logical defect matrix.
For edit family $e$ and formula family $f$, each cell estimates
\[
  D(e,f)=\frac{1}{|\mathcal{R}_{e,f}|}\sum_{C\in\mathcal{R}_{e,f}}\defect(C),
  \qquad
  \sigma(e,f)^2=\frac{1}{|\mathcal{R}_{e,f}|-1}\sum_C(\defect(C)-D(e,f))^2,
\]
where $\mathcal{R}_{e,f}$ is the set of generated certificates for that edit/formula pair.
The sidecar CSV records $|\mathcal{R}_{e,f}|$ for every cell.
This matters logically because the heatmap is not a detector score: it is the aggregate failure rate of independently checkable finite-trace formulas.
Compression and scene reorder concentrate defects in $\varphi_{\mathsf{cont}}$ and $\varphi_{\mathsf{shock}}$, while the unedited false-positive mean is \SweepFalsePositive{}.

\begin{figure}[H]
    \centering
    \includegraphics[width=0.92\linewidth]{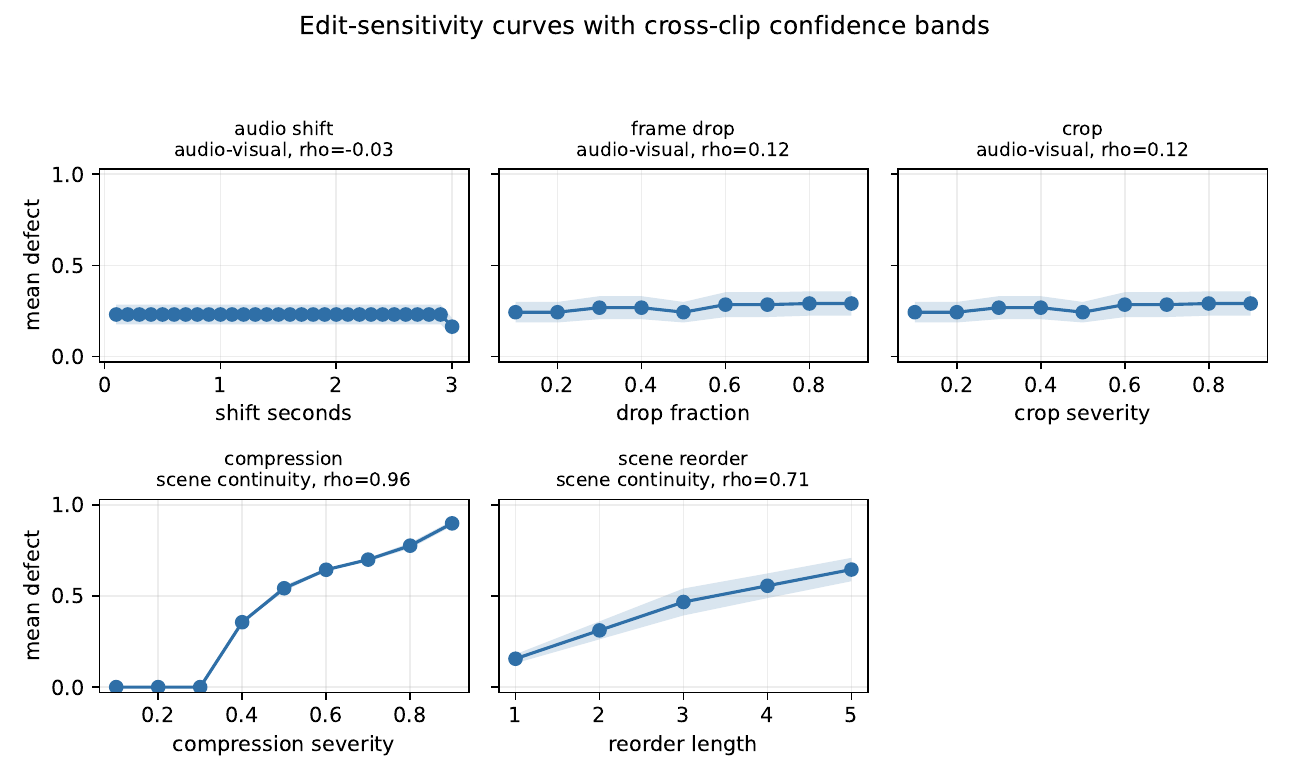}
    \caption{Edit-sensitivity curves from certificate defects. Small panels separate edit/formula pairs, panel titles report Spearman $\rho$, and shaded bands show cross-clip standard error for each perturbation parameter.}
    \label{fig:curves}
\end{figure}

The edit-sensitivity curves in Figure~\ref{fig:curves} turn the atlas into a monotonicity test.
For an edit parameter $\theta$, the curve is
\[
  d_{e,f}(\theta)=
  \frac{1}{|\mathcal{C}_{e,\theta,f}|}
  \sum_{C\in\mathcal{C}_{e,\theta,f}}\defect(C),
  \qquad
  m(e,f)=\max(0,\rho(\theta,d_{e,f}(\theta))),
\]
where $\rho$ is the Spearman rank correlation over the sweep grid.
The shaded bands are cross-clip standard errors, so a high curve with a narrow band indicates a stable logical signature rather than an isolated clip artifact.
Compression has a strong monotone effect on continuity certificates, scene reorder has a strong monotone effect on continuity certificates, and audio shift is less monotone because detector atoms can already mark background audio events in unedited clips.

\begin{figure}[H]
    \centering
    \includegraphics[width=0.88\linewidth]{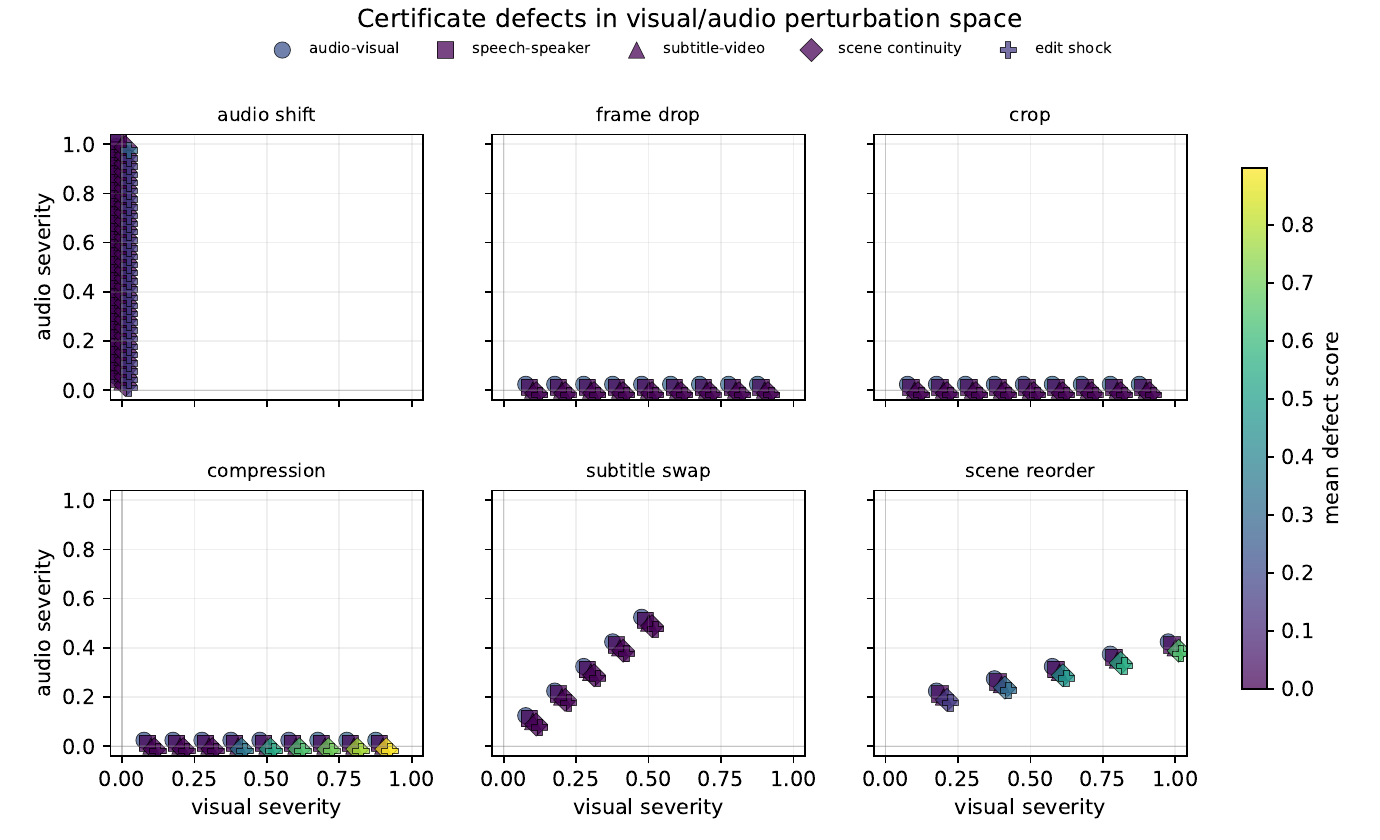}
    \caption{Certificate defects over visual and audio perturbation severity. Each panel isolates one edit family; marker shape identifies the formula family and color gives mean certificate defect.}
    \label{fig:scatter}
\end{figure}

The perturbation map in Figure~\ref{fig:scatter} encodes the same certificates in a two-axis geometry without mixing all edit families in one panel.
Each edit instance is mapped to a severity pair
\[
  s(x)=(s_v(x),s_a(x))\in[0,1]^2,
  \qquad
  z(x)=\frac{1}{|\Phi|}\sum_{\varphi\in\Phi}\defect(\Eval(\trace_x,\varphi)).
\]
The color is $z(x)$.
Audio shift has $s_v\approx0$ and variable $s_a$, so it forms the left vertical band.
Frame drop, crop, and compression have variable $s_v$ and $s_a\approx0$, so their panels show lower horizontal structure.
Subtitle swap changes textual alignment and is plotted as a low diagonal; scene reorder changes both temporal order and visual continuity, producing a higher diagonal.
The small multiples therefore show where each counterfactual edit family lives in the perturbation space before the checker produces the certificate defect.

\subsection{Detector-abstraction robustness atlas}

The robustness atlas asks whether certificates are brittle with respect to the abstraction map $\Abs_\theta$.
For each trace group, the artifact constructs a threshold lattice over detector-style atom confidences, evaluates the five formula families at temporal radii $r\in\{0,\ldots,8\}$, and compares each certificate with the reference abstraction at threshold $0.5$ and radius $1$.
The run writes aggregate tables only; it does not emit one JSON certificate per threshold/radius point.
This keeps the artifact small enough for arXiv submission while still measuring the logical preservation theorem.

\begin{figure}[H]
    \centering
    \includegraphics[width=0.92\linewidth]{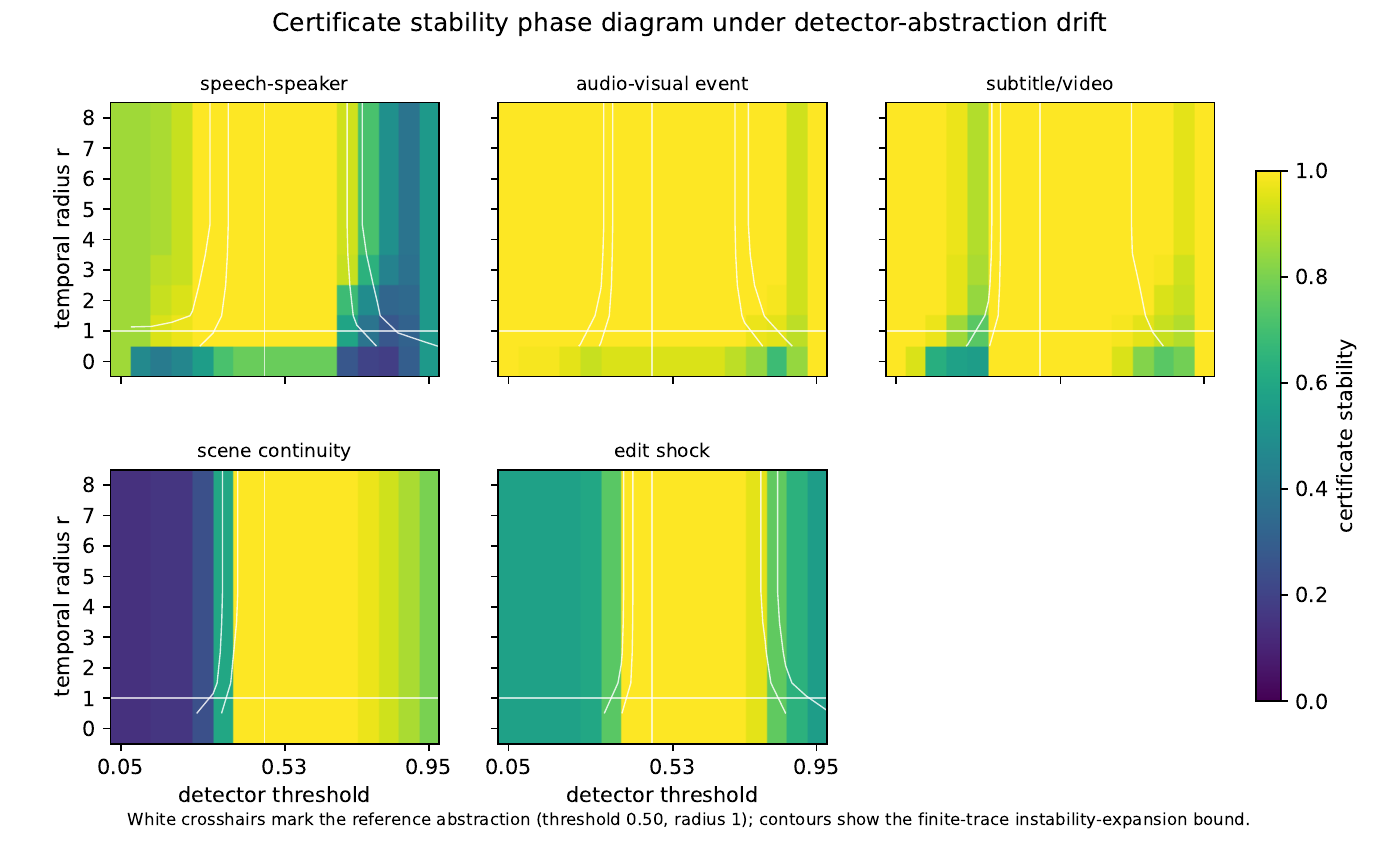}
    \caption{Certificate stability phase diagram for detector-abstraction drift. The horizontal axis is atom threshold, the vertical axis is temporal radius, color is verdict-and-counterexample stability, and white contours show the empirical mean of the theoretical instability-expansion bound $|U^{+r}|/n$.}
    \label{fig:robustness}
\end{figure}

The phase diagram in Figure~\ref{fig:robustness} is the experimental counterpart of the preservation theorem.
For threshold $\theta$, radius $r$, and formula $\varphi$, the atlas computes
\[
  S(\theta,r,\varphi)
  =\Pr\bigl[\Cert_{\theta,r,\varphi}\text{ has the same verdict and first counterexample as }\Cert_{0.5,1,\varphi}\bigr].
\]
It also computes the observed defect deviation
\[
  \Delta(\theta,r,\varphi)=
  \bigl|\defect(\Cert_{\theta,r,\varphi})-\defect(\Cert_{0.5,r,\varphi})\bigr|
\]
and verifies the pointwise inequality
\[
  \Delta(\theta,r,\varphi)\le |U_{\theta}^{+r}|/n.
\]
Across \RobustnessGroups{} completed trace groups and \RobustnessRows{} threshold-radius-formula cells, the aggregate certificate-stable region is \RobustnessStableRegion{}\%.
The first counterexample is preserved in \RobustnessCexPreserved{}\% of comparisons.
The mean defect deviation is \RobustnessDefectDeviation{}, and the maximum bound violation is \RobustnessBoundViolation{}.

\begin{table}[H]
\centering
\small
\begin{tabular}{ll}
\toprule
Metric & Value\\
\midrule
Stable certificate region & \RobustnessStableRegion{}\%\\
Counterexample preserved under drift & \RobustnessCexPreserved{}\%\\
Mean defect deviation & \RobustnessDefectDeviation{}\\
Max defect bound violation & \RobustnessBoundViolation{}\\
Median threshold margin to certificate flip & \RobustnessThresholdMargin{}\\
Median radius margin to certificate flip & \RobustnessRadiusMargin{}\\
\bottomrule
\end{tabular}
\caption{Detector-abstraction robustness metrics from the generated aggregate atlas tables.}
\label{tab:robustness}
\end{table}

This result should be read narrowly.
It says that, for the completed aggregate abstraction run, certificate verdicts and first counterexample windows remain stable over a broad region and observed defect changes stay below the instability-expansion bound.
It does not prove that CLIP or AST atoms are perceptually correct.

\begin{table}[H]
\centering
\small
\begin{tabular}{ll}
\toprule
Metric & Value\\
\midrule
Completed detector certificates & \DetectorRows{}\\
Completed sweep certificates & \SweepRows{}\\
False-positive mean on unedited clips & \SweepFalsePositive{}\\
Mean localization error over expected windows & \SweepLocalization{} s\\
Certificate stability under repeated checking & \SweepStability{}\\
Detector throughput in completed benchmark run & \DetectorThroughput{} cert/s\\
Mean/peak GPU utilization in scale telemetry & \GpuMean{}\% / \GpuPeak{}\%\\
Telemetry duration and peak VRAM & \GpuHours{} h / \GpuVramPeak{} GiB\\
Robustness phase cells & \RobustnessRows{}\\
Robustness max bound violation & \RobustnessBoundViolation{}\\
\bottomrule
\end{tabular}
\caption{Reproducibility metrics from generated CSV, JSON, and telemetry logs. The detector row uses the completed YouCook2 HF subset run; the sweep rows use the completed dense perturbation run until the active scale-up sweep finishes.}
\label{tab:metrics}
\end{table}
\FloatBarrier

\section{Artifact and Packaging}

The arXiv upload package contains only the TeX source, bibliography, and manuscript figures so that the submitted zip remains below arXiv's 50 MB upload limit.
The full runnable artifact is hosted externally at \ArtifactUrl{} as a Hugging Face Dataset file tree rather than as a nested archive.
That external artifact contains the repository code, generated traces, robustness atlas tables, logs, figures, tests, and fixture-backed data paths.
Large certificate collections are sharded into subdirectories to preserve individual JSON certificates while respecting dataset-repository file-count limits.
The upload package keeps TeX files outside the artifact archive and excludes README and license-style files from the arXiv-only source bundle requested by the authors.

\begin{table}[H]
\centering
\small
\begin{tabular}{ll}
\toprule
Command & Purpose\\
\midrule
\texttt{make smoke} & Minimal end-to-end trace, formulas, certificates, and CSV.\\
\texttt{make test} & Unit tests for schema, edits, logic, and pipeline.\\
\texttt{make run-small} & Fixture-backed oracle atom run.\\
\texttt{make download-real} & Download public MP4 samples used for visible figures.\\
\texttt{make run-detector} & Run CLIP/AST atoms on real MP4 clips.\\
\texttt{make run-sweep} & Run dense counterfactual perturbation sweeps.\\
\texttt{make download-youcook2} & Download a bounded public YouCook2 HF subset.\\
\texttt{make run-youcook2-detector} & Run detector atoms on the YouCook2 subset.\\
\texttt{make run-youcook2-sweep} & Run dense sweeps on the YouCook2 subset.\\
\texttt{python3 -m pcmt.cli replay-audit} & Reconstruct certificates independently.\\
\texttt{python3 -m pcmt.cli radius-ablation} & Measure sensitivity to bounded-modal radius.\\
\texttt{python3 -m pcmt.cli corrupted-drift} & Negative tests for detector-to-atom drift.\\
\texttt{make run-robustness} & Aggregate threshold/radius stability atlas and phase figure.\\
\texttt{make package-arxiv} & Build TeX source and artifact packages.\\
\bottomrule
\end{tabular}
\caption{Reproducibility commands. The GPU is used only as a substrate for detector batching, media extraction support, and bitset checking.}
\label{tab:commands}
\end{table}

\section{Limitations}

The certificate validates logical reconstruction over serialized atoms.
It does not validate the truth of perception atoms.
This separation is necessary because neural atom extraction and logical checking have different error models.

The public MP4 case study is visually inspectable but not a substitute for a restricted benchmark run.
The artifact therefore includes adapters and HF subset tooling for YouCook2 and AVE.
The artifact also reports missing local paths for AVA ActiveSpeaker, TVQA, and ActivityNet Captions when those datasets are unavailable.

Some formulas are intentionally conservative.
An ambiguous sound source can make audio-visual event consistency fail even when the scene is semantically plausible.
This is a specification choice rather than a detector benchmark claim.

\section{Conclusion}

Proof-carrying multimodal timelines turn video-audio consistency into a finite logical artifact.
The checker emits deterministic certificates with defect scores and local counterexamples.
The experiments show that counterfactual edits create different logical defect signatures.
The artifact couples these certificates to real decoded video frames, detector atoms, telemetry logs, CSV traces, and arXiv-compatible packaging.

\FloatBarrier
\bibliographystyle{plain}
\bibliography{refs}

\section*{Appendix: Operational Obligations and Resolution Trace}

The implementation work exposed a separate class of finite-trace obligations: not formulas over video atoms, but formulas over experiment states.
Let an execution trace be
\[
  \mathcal{E}=\langle e_0,\ldots,e_{m-1}\rangle,
  \qquad
  e_t=(u_t,p_t,a_t,o_t),
\]
where $u_t$ is GPU utilization, $p_t$ is the active process set, $a_t$ is the artifact count, and $o_t$ is the observed output state.
The operational invariant was
\[
  G\bigl((u_t<\epsilon)\Rightarrow F_{[0,\Delta]}\mathsf{valid\_gpu\_job}\bigr),
\]
with the side condition that a valid job must produce traces, certificates, metrics, figures, logs, or dataset artifacts.
This prevented meaningless synthetic load: when the GPU was idle, the next action had to be a dataset-backed detector or perturbation run rather than an artificial matrix burn.

The main scheduling defect was a mismatch between local completion evidence and remote execution state.
Formally, the local artifact state $L_t$ and remote state $R_t$ were not bisimilar:
\[
  L_t \not\sim R_t
  \quad\text{because}\quad
  \exists x\in R_t.\ x\notin L_t,
\]
where $x$ ranged over telemetry logs, partial traces, and active run directories.
The resolution was to introduce explicit pull and metric-reconstruction steps.
After each remote synchronization, manuscript macros are recomputed from CSV, JSON, and telemetry sources rather than copied from memory.
This changes the evidence relation from informal reporting to a checkable projection
\[
  \pi(R_t)=
  (\mathsf{clips},\mathsf{certificates},\mathsf{throughput},\mathsf{utilization},\mathsf{vram}),
\]
which is serialized into the paper and ancillary archive.

The second defect was queue under-specification.
A queue entry is useful only if it denotes a productive future transition:
\[
  q=(\mathsf{wait\_set},\mathsf{command},\mathsf{artifact\_target},\mathsf{fallback}).
\]
Early queue records captured commands but not enough information about active remote jobs and fallback behavior.
The repair was to add follow-up scale scripts that wait for current process identifiers, then start a real benchmark job with bounded fallback limits.
If the requested 1000-clip subset is infeasible because of disk pressure, the script records the failed precondition and lowers the limit while preserving the same detector, trace, certificate, and figure pipeline.
Thus the queue remains productive without hiding resource constraints.

The third defect was evidential ambiguity around GPU telemetry.
Peak utilization alone is not a scientific measurement.
The revised artifact treats telemetry as a finite numeric trace and extracts mean utilization, peak utilization, telemetry duration, peak memory, and detector certificate throughput.
These values still do not form the paper's scientific claim; they only witness that the multimedia evidence was produced by a saturated execution substrate.
The logic claim remains at the certificate layer: for a fixed serialized atom trace and syntactic formula identifier, the checker reconstructs the same verdict, violating indices, defect score, and first counterexample.

The final defect was visual overloading.
The heatmap originally mixed mean, variance, and sample count inside every cell; the scatter mixed all edit families, labels, arrows, and formula markers in one panel.
The repair was a display-level normal form: the main figure carries one visual variable per semantic role, while secondary numeric fields move to CSV.
This mirrors the logical design.
The certificate should expose a minimal counterexample window in the main artifact and leave extended audit fields in machine-readable sidecars.

\end{document}